\documentclass[aps,prl,twocolumn,superscriptaddress,floatfix,longbibliography]{revtex4-2}

\usepackage{amsmath,amssymb,bm,mathtools}
\usepackage{graphicx}
\usepackage{dsfont}
\usepackage[colorlinks=true,allcolors=blue]{hyperref}

\newcommand{\E}{\mathbb{E}}

\newcommand{\Dchi}{D_{\chi^2}}
\newcommand{\DRF}{D_{\mathrm{RF}}}
\newcommand{\ktrue}{g^{\mathrm{true}}}

\usepackage{amsthm}

\newtheorem{lemma}{Lemma}
\newtheorem{theorem}{Theorem}
\newtheorem{proposition}{Proposition}

\begin{document}

\title{State convertibility and fluctuation theorems from a dynamical reference: majorization meets martingales}

\author{Davide Cugini}
\affiliation{Dipartimento di Fisica ``A. Volta,'' Universit\`a di Pavia, via Bassi 6, 27100 Pavia, Italy}
\author{Giacomo Guarnieri}
\affiliation{Dipartimento di Fisica ``A. Volta,'' Universit\`a di Pavia, via Bassi 6, 27100 Pavia, Italy}
\affiliation{INFN Sezione di Pavia, Via Agostino Bassi 6, I-27100 Pavia, Italy}

\date{\today}

\begin{abstract}
State convertibility represents a fundamental concept used to determine whether a transformation is possible given a specific set of resources. Within the field of Thermodynamics, where physical process are required to preserve a reference state typically in microcanonical or canonical form, this translates into the notions of majorization and thermo-majorization ---criteria that require constructing and comparing state-dependent Lorenz
curves. 
In this work, we firstly unify and extend these notions to an arbitrary and possibly
time-dependent reference distribution $g(t)$, introducing the concept of $g(t)$-majorization; we then introduce a dual picture whereby state convertibility is turned into a one-dimensional
convex-order problem, which allows us to demonstrate that a transition is admissible if and only if the associated real-valued distributions of relative populations $ k_j(t)/g_j(t)$ are connected by a martingale. 
Building on it, we then derive an exact
fluctuation theorem for a reference-relative entropy production whose average
violation certifies, through a $\chi^{2}$-divergence bound, the mismatch
between an assumed and the true reference evolution---a model-independent
diagnostic that requires no independent characterization of the latter and
turns an observed breakdown of the fluctuation relation into a certified lower
bound on the reference error.
\end{abstract}

\maketitle

\emph{Introduction.---}%
The cornerstone of the entire statistical mechanics, the probabilistic and microscopic framework of Thermodynamics, was laid by Ludwig Boltzmann in 1877~\cite{Boltzmann1877}. 
In this pioneering paper, he showed that for $N$ particles at fixed total energy, a given energy-distribution profile $k=(k_1,\ldots, k_m)$, with $k_j=n_j/N$, is compatible with $W(k)=N!/\prod_j (N k_j)!$ microscopic configurations. If the particles are then drawn from a uniform probability distribution, Boltzmann showed that the occurrence of any energy-distribution is observed with probability proportional to $W(k)$
and introduced the notion of statistical entropy as the logarithm of this count. By then identifying equilibrium as simply the most probable configuration, Boltzmann turned Thermodynamics into a counting problem and gave the Second Law its statistical meaning.
In terms of
the occupation fractions $k_j=n_j/N$, Stirling's formula renders the counting
entropy as $\tfrac1N\ln W\to-\sum_j k_j\ln k_j$, the form later central to Gibbs
and, in information theory, to Shannon~\cite{Shannon1948}. The object at the
center of the argument is the \emph{population vector} $k=(k_1,\dots,k_m)$, a
normalized distribution over microstates.\\\indent
Despite its fundamental importance, Boltzmann's original construction was still subject to limitation: it
singled out the unique most probable macrostate relative to a fixed, uniform
count, and it quantified only the average of the underlying fluctuating quantity, not
its distribution, nor the transformations between macrostates that a physical
process may actually perform. The century that followed resolved these
limitations, but in separate directions, each generalizing the population
vector in its own way and each leaving a characteristic blind spot. Large
deviation theory recognized Boltzmann's counting as the first instance of a
large-deviation principle~\cite{Ellis1999,Touchette2009} and, through the work
of Kullback and Leibler in statistics~\cite{KullbackLeibler1951} and Sanov in
probability~\cite{Sanov1957}, generalized the entropy to the relative entropy
$D_{\mathrm{KL}}(k\|g)=\sum_j k_j\ln(k_j/g_j)$ against an arbitrary reference
$g$, identifying it as the exponential rate at which the empirical distribution
of samples drawn from $g$ deviates to $k$.
Information theory, in parallel, recognized that this same
relative entropy coincides with the negative of Boltzmann's entropy and with
Shannon's information entropy~\cite{Shannon1948,CoverThomas}, and, following
Jaynes~\cite{Jaynes1957}, recast equilibrium itself as inference against a
reference---supplying the bridge between thermodynamics and information, but not
the dynamical question of which state transformations are physically allowed.
That question is the leitmotif of the resource-theoretical approach to Thermodynamics,
which established that convertibility is governed by majorization when the
allowed operations preserve a uniform distribution, and by thermo-majorization
when they preserve a Gibbs
state~\cite{HardyLittlewoodPolya,Nielsen1999,Horodecki2013,Lostaglio2016}; yet
these criteria are tied to a \emph{fixed}, equilibrium reference and are decided
by constructing and comparing state-dependent Lorenz curves, an operationally
demanding procedure. A fourth thread, martingale theory, meanwhile turned the
Second Law into a family of exact fluctuation relations for the entropy produced
along individual trajectories~\cite{Jarzynski1997,Crooks1999,Seifert2012} and
revealed that the exponential of the entropy production is a
martingale~\cite{Neri2017,Roldan2023,Manzano2019}---but, again, against the
equilibrium or steady-state reference of standard stochastic thermodynamics.

Each of these developments takes a reference distribution as the object against
which irreversibility is measured, exactly as Boltzmann took his count; and each
treats a different facet---the rate of a fluctuation, its informational meaning,
the convertibility it permits, its trajectory statistics---in a formalism of its
own. What has been missing is the common structure that would let a single,
arbitrary, and possibly time-dependent reference $g(t)$ carry all four at once.\\\indent
Our work precisely provides that. Associating to each population vector $k(t)$ the real-valued
distribution of its relative populations $\kappa_j=k_j(t)/g_j(t)$---the
\emph{parent distribution}---collapses the separate questions into one. State
convertibility, classically decided by comparing Lorenz curves, becomes a
one-dimensional convex-order comparison of parent distributions, equivalent to
the existence of a martingale connecting them. 
The pointwise quantity $-\ln\kappa_j$---the
reference-relative surprisal of information theory---defines a
\emph{reference-relative entropy production} $\Delta s$ whose mean is
Boltzmann's relative entropy and which obeys an exact integral fluctuation
theorem, placing the fluctuation relations of stochastic thermodynamics within
martingale theory for an arbitrary dynamical reference. 
In addition, we also show that when the assumed
reference departs from the true one, the average violation of that theorem
certifies, through a $\chi^2$ divergence, a model-independent lower bound on the
mismatch; we illustrate this last result on a driven two-level system, where an
apparent breakdown of the second law becomes a quantitative measure of
nonadiabatic driving. Large deviation theory, information theory, resource
theories, and stochastic fluctuation relations thus meet on the single terrain
that Boltzmann first surveyed: the distribution of a system's populations
relative to a reference.\\

\emph{State convertibility.---}%
Consider a generic set of microstates $\{\mathcal{S}_j\}_{j=1}^m$ (not necessarily associated with particles' energies)
and let $g(t) = (g_1(t),\,...g_m(t))$ be a reference distribution of their populations whose time-evolution is known a priori.
Moreover, 
let $k(t)$ be another distribution absolutely continuous with
respect to $g(t)$, i.e. $g_j=0\Rightarrow k_j=0$ (automatic for $g_j>0$). We ask whether,
among the stochastic processes compatible with the evolution of $g$, there
exists one carrying $k(t_1)$ into a \textit{target} $k(t_2)$: that is, whether there is a
column-stochastic \emph{transition matrix} $T$ such that
\begin{equation}
T\,g(t_1)=g(t_2)\quad\wedge\quad T\,k(t_1)=k(t_2).
\label{eq:convertibility}
\end{equation}
The first condition fixes $T$ to be a legitimate evolution for the reference
itself; the second demands that the same $T$ reproduce the desired transition.
Convertibility is thus a statement about compatibility with the reference
dynamics, not about $k(t_1)\!\to\! k(t_2)$ in isolation.

The central object is the vector of \emph{relative populations}
\begin{equation}
\kappa_j(t):=\frac{k_j(t)}{g_j(t)},
\label{eq:kappa}
\end{equation}
which measures, microstate by microstate, how far $k$ deviates from the
reference: $\kappa_j=1$ signals a microstate populated exactly as $g$ predicts,
$\kappa_j\gtrless 1$ over- or under-population.
A first characterization of admissibility can be formulated directly in terms of the parent distribution, through a Lorenz-type integral inequality that unifies and generalizes majorization and thermo-majorization to an arbitrary dynamical reference, which we therefore term \emph{$g(t)$-majorization}.
While important for the complete picture, it is not needed for the main results that follow and therefore deferred to the End Matter [see Eq.~\eqref{eq:lorenz}]. 

Here we proceed directly to the transparent probabilistic criterion.
To each $\kappa(t)$ we associate
the real-valued \emph{parent distribution}
\begin{equation}
p_\kappa(x;t)=\sum_{j}g_j(t)\,\delta\!\big(x-\kappa_j(t)\big),
\label{eq:parent}
\end{equation}
the law of the relative population obtained by drawing a microstate from the
reference ensemble $g(t)$: it tracks the probability, under $g$, of drawing a
given amount of deviation from the reference. Because $g$ and $k$ are each
normalized, $p_\kappa$ has unit mass and unit first moment,
$\E_{x\sim p_\kappa}[1]=\sum_j g_j=1$ and
$\E_{x\sim p_\kappa}[x]=\sum_j g_j\kappa_j=\sum_j k_j=1$; it is thus a genuine
probability distribution, but one over relative populations rather than over
microscopic configurations.

The equivalence rests on a coupling. A martingale coupling of the two parent
distributions is a joint law $P(n,j):= \mathrm{P(x_1 = \kappa_n(t_1),x_2= \kappa_j(t_1))}$, with
$x_1\sim p_\kappa(\cdot;t_1)$ and $x_2\sim p_\kappa(\cdot;t_2)$, whose marginals
are the reference weights,
$\sum_j P(n,j)=g_n(t_1)$ and $\sum_n P(n,j)=g_j(t_2)$, and which satisfies the
martingale condition $\E[x_1\mid x_2=\kappa_j(t_2)]=\kappa_j(t_2)$. Writing the
conditional expectation explicitly and using Bayes' rule, the martingale
condition reads $\sum_n \kappa_n(t_1)\,P(n,j)=\kappa_j(t_2)\,g_j(t_2)=k_j(t_2)$.
Defining
\begin{equation}
T_{jn}:=\frac{P(n,j)}{g_n(t_1)},
\label{eq:Tdef}
\end{equation}
the marginals and the martingale condition translate, respectively, into
column-stochasticity of $T$, reference preservation $T g(t_1)=g(t_2)$, and the
physical transition $T k(t_1)=k(t_2)$; the map~\eqref{eq:Tdef} is a bijection,
so the converse holds as well~\cite{SM}. This establishes our first main result.

\begin{theorem}[Convertibility criterion]\label{res:convert}
A transition $k(t_1)\!\to\!k(t_2)$ compatible with the reference dynamics of
$g(t)$ exists if and only if the associated parent distributions are connected by a martingale.
Namely
\begin{align}
\exists\,T|\ Tg(t_1)=g(t_2) \wedge Tk(t_1)=k(t_2)
\end{align}
iff
\begin{align}
p_\kappa(\cdot\,;t_1)\xrightarrow{\,\mathrm{mart.}\,} p_\kappa(\cdot\,;t_2).\nonumber
\end{align}
\end{theorem}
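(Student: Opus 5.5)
The plan is to make the bijection~\eqref{eq:Tdef} between transition matrices and martingale couplings explicit, and check that each defining property on one side matches a defining property on the other. Couplings will be indexed by microstate labels $(n,j)$ rather than by the values $\kappa$. Coincident values $\kappa_n=\kappa_{n'}$ only aggregate mass, so a coupling on values can always be refined to one on labels, for instance by splitting proportionally to the weights $g_n$; conversely, any coupling on labels pushes forward to one on values.

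Throughout, let $T$ be a matrix and $P(n,j)=T_{jn}g_n(t_1)$ the associated joint law.

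\emph{($\Rightarrow$).} Assume $T$ is column-stochastic with $Tg(t_1)=g(t_2)$ and $Tk(t_1)=k(t_2)$. The joint law $P$ is nonnegative, and its two marginals follow directly:
\begin{align}
\sum_j P(n,j)&=g_n(t_1)\sum_j T_{jn}=g_n(t_1),\nonumber\\
\sum_n P(n,j)&=(Tg(t_1))_j=g_j(t_2).\nonumber
\end{align}
The martingale condition follows in the same way:
\begin{equation}
\sum_n \kappa_n(t_1)P(n,j)=\sum_n T_{jn}k_n(t_1)=k_j(t_2)=\kappa_j(t_2)g_j(t_2).\nonumber
\end{equation}
Dividing by $g_j(t_2)$ gives $\E[x_1\mid x_2=\kappa_j(t_2)]=\kappa_j(t_2)$ whenever $g_j(t_2)>0$. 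The states with $g_j(t_2)=0$ carry zero mass, so there the condition is vacuous.

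\emph{($\Leftarrow$).} Given a martingale coupling $P$ on labels, define $T_{jn}=P(n,j)/g_n(t_1)$ whenever $g_n(t_1)>0$. Each such column sums to one by the first marginal condition. The same identities as above, read backwards, give $Tg(t_1)=g(t_2)$ and $Tk(t_1)=k(t_2)$.

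The main obstacle is the degenerate columns with $g_n(t_1)=0$. There I will set $T_{jn}$ to any probability vector, say $\delta_{j1}$. These columns multiply $g_n(t_1)=0$ and, by absolute continuity, also $k_n(t_1)=0$, so they leave both $Tg(t_1)$ and $Tk(t_1)$ unchanged.

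A second subtlety is that the martingale condition is stated conditionally on the \emph{value} $x_2$, not on the label. Conditioning on a value amounts to summing the label-wise identity over all $j$ with the same $\kappa_j(t_2)$. The converse direction therefore needs a coupling whose conditional mean is correct label by label. I will obtain one by taking the value-level coupling and splitting each value cell of $x_2$ among its labels proportionally to $g_j(t_2)$. Within that cell the conditional law of $x_1$ is then the same for every label, so each label inherits the conditional mean $\kappa_j(t_2)$.

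With these two bookkeeping points handled, the equivalence is immediate.
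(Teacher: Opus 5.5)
Your proof is correct and follows the same route as the paper's Supplemental proof: the correspondence $P(n,j)=T_{jn}g_n(t_1)$ turns column-stochasticity, $Tg(t_1)=g(t_2)$ and $Tk(t_1)=k(t_2)$ into the first-marginal, second-marginal and martingale conditions, respectively. Your two bookkeeping points are sound and more careful than the paper. These are arbitrary probability columns where $g_n(t_1)=0$, and re-splitting each value cell proportionally to $g_j(t_2)$ so that the conditional mean holds label by label. The paper's identity $\sum_n P(n,j_2)\kappa_n(t_1)=P(j_2)\,\mathbb{E}[x_1\mid x_2=\kappa_{j_2}(t_2)]$ tacitly treats conditioning on the label $j_2$ as conditioning on the value $x_2$.
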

\vspace{0.3cm}
\noindent
Physically, the transition is possible precisely when the distribution of
relative populations \emph{concentrates} in time---when the population profile
relaxes toward the reference $g$, so that $p_\kappa$ narrows; transitions that
would require it to spread are forbidden. 
By Strassen's theorem~\cite{Strassen1965}, the existence of a martingale
coupling is equivalent to the convex order of the two parent distributions,
$p_\kappa(\cdot;t_1)\overset{\mathrm{cx}}{\succeq}p_\kappa(\cdot;t_2)$, that is,
\begin{equation}
\E_{x\sim p_\kappa(\cdot;t_1)}[\phi(x)]
\ \ge\
\E_{x\sim p_\kappa(\cdot;t_2)}[\phi(x)]
\label{eq:convexorder}
\end{equation}
for every convex $\phi$ for which the expectations are defined. Since all
parent distributions share unit mass and unit mean, equality holds
automatically for affine $\phi$, as the definition of convex order requires;
the content of \eqref{eq:convexorder} is therefore the ordering on the
genuinely convex part. Convex order thus plays, in the space of parent
distributions, the role majorization plays in the space of population
vectors---an equivalence classically known for a static, uniform
reference~\cite{Cartier1964,MarshallOlkin} and extended here, through the
parent-distribution construction, to an arbitrary dynamical reference $g(t)$.
The $m$-dimensional, sorting-based comparison of population vectors has
become a one-dimensional convex-order comparison of real-valued
distributions, decidable by the existence of a martingale. This
distributional viewpoint is closely related to the correspondence between
ranked samples and their associated order-statistics distributions developed
in ~\cite{Cugini2025}, where ranking information is likewise recast in
terms of the statistics of an underlying real-valued distribution.
That convertibility should be judged relative to a reference distribution is
the organizing principle of the resource-theoretic view of
thermodynamics~\cite{Brandao2015,Chitambar2019} and, classically, of
$d$-majorization and relative majorization and their roots in the comparison of
statistical experiments~\cite{Blackwell1953,Veinott1971,Ruch1978,Ruch1980};
the present construction realizes that principle for a reference that is itself
dynamical.
The chain of equivalences
$\eqref{eq:convertibility}\Leftrightarrow\eqref{eq:lorenz}
\Leftrightarrow\text{martingale}\Leftrightarrow\eqref{eq:convexorder}$
is summarized in Fig.~\ref{fig:schematic}; the sorting-based construction of
$\Gamma_\kappa$ has been replaced by a single one-dimensional convex-order
comparison, checkable through a martingale coupling.\\

\begin{figure}[t]
\centering
\includegraphics[width=\linewidth]{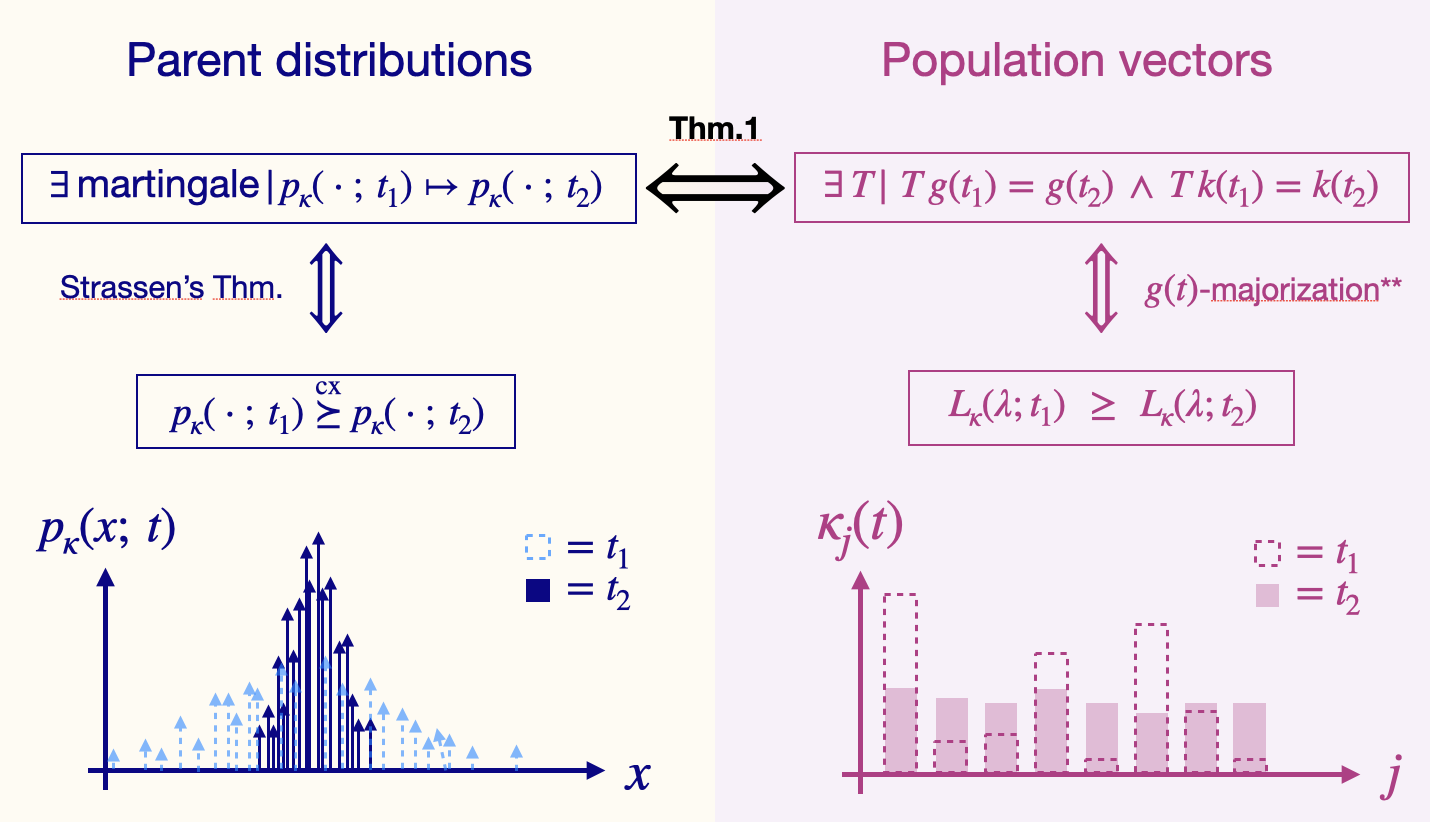}
\caption{Schematic of the framework equivalence demonstrated in this paper. 
A population vector $k(t)$, compared
against a reference $g(t)$, is mapped to the parent distribution $p_\kappa(\cdot;\,t)$ of
its relative populations $\kappa_j(t)=k_j(t)/g_j(t)$. State convertibility
\eqref{eq:convertibility} and the $g(t)$-majorization criterion ($^{**}$see End Matter \eqref{eq:lorenz}) on the right side (purple background) are shown to be fully equivalent to each other and to the existence of a martingale coupling between the respective parent distributions (Theorem~\ref{res:convert}) and finally, by virtue of Strassen's Theorem, to convex order relation \eqref{eq:convexorder} (left hand side). 
}
\label{fig:schematic}
\end{figure}

\emph{Fluctuation relation.---}%
Beyond convertibility, the same reference-state construction yields a natural
fluctuation relation. For each microstate $\mathcal{S}_j$ define the reference-relative
surprisal
\begin{equation}
s_j(t):=-\ln\kappa_j(t)=-\ln\frac{k_j(t)}{g_j(t)},
\label{eq:geoentropy}
\end{equation}
so that $e^{-s_j(t)}=\kappa_j(t)$. This is the difference between the Shannon
information content $-\ln k_j$ of microstate $j$ and that assigned to it by the
reference, $-\ln g_j$~\cite{Shannon1948,CoverThomas}; sampled according to the
population vector, its mean is precisely the relative entropy,
$\sum_j k_j s_j=D_{\mathrm{KL}}(k\|g)$, recovering Boltzmann's and Sanov's
entropy. 
For a process $T$ which now is assumed to evolve the population vector $k(t_1)$ 
into the population vector $k(t_2) = T\,k(t_1)$,
we introduce the \emph{reference-relative entropy production} as
$\Delta s:=s_{j_2}(t_2)-s_{j_1}(t_1)$, and its average is taken over the
physical joint law of $(j_1,j_2)$, 
\begin{equation}
\Pi(j_1,j_2)=k_{j_1}(t_1)\,T_{j_2 j_1},
\label{eq:Pi}
\end{equation}
the initial population followed by the forward transition. This is the law of where the
system actually is at the two times, and it must be distinguished from the
reference-weighted coupling of the previous section. This yields our second
main result.

\begin{theorem}[Reference-generalized fluctuation theorem]\label{res:ift}
For any transition matrix $T$ compatible with the reference dynamics,
$Tg(t_1)=g(t_2)$, the reference-relative entropy production satisfies the
integral fluctuation theorem
\begin{equation}
\E_\Pi\!\big[e^{-\Delta s}\big]=1,
\qquad\text{whence}\qquad
\E_\Pi[\Delta s]\ge 0 .
\label{eq:ift}
\end{equation}
\end{theorem}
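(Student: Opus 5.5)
The plan is to compute $\E_\Pi[e^{-\Delta s}]$ directly from the definitions \eqref{eq:geoentropy} and \eqref{eq:Pi}. Since $e^{-s_j(t)}=\kappa_j(t)$, we have $e^{-\Delta s}=\kappa_{j_2}(t_2)/\kappa_{j_1}(t_1)$. Weighting by $\Pi(j_1,j_2)=k_{j_1}(t_1)T_{j_2j_1}$, the factor $k_{j_1}(t_1)/\kappa_{j_1}(t_1)$ collapses to $g_{j_1}(t_1)$. This gives
\begin{equation*}
\E_\Pi\!\big[e^{-\Delta s}\big]=\sum_{j_1,j_2} g_{j_1}(t_1)\,T_{j_2j_1}\,\kappa_{j_2}(t_2)
=\sum_{j_2}\big(Tg(t_1)\big)_{j_2}\,\kappa_{j_2}(t_2).
\end{equation*}
The hypothesis $Tg(t_1)=g(t_2)$ then turns this into $\sum_{j_2} g_{j_2}(t_2)\kappa_{j_2}(t_2)=\sum_{j_2}k_{j_2}(t_2)=1$, the unit-mean property of the parent distribution noted after \eqref{eq:parent}. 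Note that only reference preservation is used, not the martingale structure of Theorem~\ref{res:convert}; the physical transition $Tk(t_1)=k(t_2)$ enters only in the definition of $\Pi$ and is not even needed for the identity.

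The inequality $\E_\Pi[\Delta s]\ge 0$ follows from Jensen's inequality applied to the convex function $x\mapsto e^{-x}$: $1=\E_\Pi[e^{-\Delta s}]\ge e^{-\E_\Pi[\Delta s]}$. As a consistency check, I would also note that $\E_\Pi[\Delta s]=D_{\mathrm{KL}}(k(t_2)\|g(t_2))-D_{\mathrm{KL}}(k(t_1)\|g(t_1))$, because the marginals of $\Pi$ are $k(t_1)$ and $Tk(t_1)=k(t_2)$. The result therefore reproduces the data-processing inequality with the sign convention of $s$.

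The main obstacle is handling the support carefully, since the sums are only well defined where the terms make sense. Wherever $k_{j_1}(t_1)=0$, the weight $\Pi$ vanishes, so the terms with $\kappa_{j_1}(t_1)=0$ (where $s=+\infty$) must be excluded. Restricting to $\{k_{j_1}(t_1)>0\}$ replaces $g_{j_1}(t_1)$ by $g_{j_1}(t_1)\mathds{1}[k_{j_1}(t_1)>0]$. The exact identity therefore holds when $k(t_1)$ has full support on the support of $g(t_1)$. Otherwise one obtains $\E_\Pi[e^{-\Delta s}]\le 1$, which still suffices for the second-law inequality.

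Similarly, I would use absolute continuity ($g_j=0\Rightarrow k_j=0$) and the convention $0\cdot\kappa=0$ for states $j_2$ with $g_{j_2}(t_2)=0$. Terms with $k_{j_2}(t_2)=0$ contribute $e^{-\Delta s}=0$ and are harmless. I would state these conventions explicitly, and then the computation above is complete.
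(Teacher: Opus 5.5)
Your argument is the paper's proof: you cancel $k_{j_1}(t_1)/\kappa_{j_1}(t_1)$ to $g_{j_1}(t_1)$, use $Tg(t_1)=g(t_2)$ to collapse the sum to $\sum_{j_2}k_{j_2}(t_2)=1$, and apply Jensen to $e^{-x}$. Your support caveat is correct and sharper than the paper, which performs that cancellation even where $k_{j_1}(t_1)=0<g_{j_1}(t_1)$. Exact equality therefore needs $k(t_1)$ to charge every state that $g(t_1)$ charges; otherwise only $\E_\Pi[e^{-\Delta s}]\le 1$ holds, and this is strict in the paper's own two-level example with $k(t_1)=(1,0)$.

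One slip is in your side remark: $\sum_j k_j s_j=-D_{\mathrm{KL}}(k\|g)$ (the paper's text has the same sign error). Hence $\E_\Pi[\Delta s]=D_{\mathrm{KL}}(k(t_1)\|g(t_1))-D_{\mathrm{KL}}(k(t_2)\|g(t_2))$, which is exactly the data-processing inequality. With the sign as you wrote it, $\E_\Pi[\Delta s]\ge 0$ would contradict data processing.
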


\noindent
The proof is immediate: using $k_{j_1}(t_1)=g_{j_1}(t_1)\kappa_{j_1}(t_1)$ and
then reference preservation $Tg(t_1)=g(t_2)$,
\begin{align}
\E_\Pi\!\big[e^{-\Delta s}\big]
&=\sum_{j_1,j_2}k_{j_1}(t_1)T_{j_2 j_1}\frac{\kappa_{j_2}(t_2)}{\kappa_{j_1}(t_1)}
\\\notag&=\sum_{j_2}\kappa_{j_2}(t_2)\,g_{j_2}(t_2)
=\sum_{j_2}k_{j_2}(t_2)=1,
\label{eq:iftproof}
\end{align}
and $\E_\Pi[\Delta s]\ge0$ follows from Jensen's inequality applied to the
convex function $e^{-x}$.

Equation~\eqref{eq:ift} is the martingale-driven generalization, to an
arbitrary and possibly time-dependent reference $g(t)$, of the integral
fluctuation theorems of stochastic thermodynamics, ordinarily stated relative
to a Gibbs reference~\cite{Jarzynski1997,Crooks1999}, and it is consistent with
the exponential-martingale structure of entropy
production~\cite{Neri2017,Roldan2023,Manzano2019,Manzano2021}. The second-law
inequality $\E_\Pi[\Delta s]\ge0$ it entails is a statement about
irreversibility measured purely with respect to the chosen reference: no
admissible process decreases the reference-relative entropy on average.\\

\emph{Certifying reference mismatch.---}%
Equation~\eqref{eq:ift} presumes the evolution of $g(t)$ is known exactly. In
practice it often is not. Consider a system driven quasi-adiabatically by a
time-dependent Hamiltonian $H(t)$ in contact with a bath at inverse temperature
$\beta$; initialized in the Gibbs state $g(t_1)\propto e^{-\beta H(t_1)}$, its
state at $t_2$ is only approximated by the instantaneous Gibbs state
$g(t_2)\propto e^{-\beta H(t_2)}$, the approximation becoming exact only in the
adiabatic limit. More generally the assumed reference $g(t_2)$ may differ from
the true evolution $\ktrue(t_2)$ of $g(t_1)$, and one wishes to quantify the
discrepancy.

Suppose, then, that the physical process is compatible with the true reference,
$T g(t_1)=\ktrue(t_2)$, while the entropy production is evaluated with the
assumed reference $g(t_2)$. Repeating the derivation of \eqref{eq:ift} with this
single change, the inner sum collapses to $\ktrue(t_2)$ rather than $g(t_2)$,
and, subtracting unity,
\begin{equation}
\E_\Pi\!\big[e^{-\Delta s}\big]-1
=\sum_{j}k_j(t_2)\,\frac{\ktrue_j(t_2)-g_j(t_2)}{g_j(t_2)}.
\label{eq:mismatch}
\end{equation}
Writing the right-hand side as an inner product of the vectors
$(\ktrue_j-g_j)/\sqrt{g_j}$ and $k_j(t_2)/\sqrt{g_j}$ and applying the
Cauchy--Schwarz inequality yields our third main result.

\begin{theorem}[Reference-mismatch bound]\label{res:chi2}
The $\chi^2$ divergence between the true and the assumed reference at $t_2$ is
lower-bounded by the observed violation of the fluctuation theorem,
\begin{equation}
\Dchi\!\big(\ktrue(t_2)\,\big\|\,g(t_2)\big)
\ \ge\
\frac{\big(\E_\Pi[e^{-\Delta s}]-1\big)^{2}}
{\displaystyle\sum_{j}k_j^{2}(t_2)/g_j(t_2)},
\label{eq:chi2}
\end{equation}
with $\Dchi(x\|y):=\sum_j (x_j-y_j)^2/y_j$, and requires no independent
knowledge of the true reference $\ktrue$.
\end{theorem}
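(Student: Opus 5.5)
The plan is to first establish the identity \eqref{eq:mismatch} and then apply Cauchy--Schwarz.

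\emph{Step 1: the identity.} We expand $\E_\Pi[e^{-\Delta s}]$ exactly as in the proof of Theorem~\ref{res:ift}. Here $\kappa_{j_2}(t_2)=k_{j_2}(t_2)/g_{j_2}(t_2)$ is built from the \emph{assumed} reference, while $T$ now satisfies $Tg(t_1)=\ktrue(t_2)$. Writing $k_{j_1}(t_1)=g_{j_1}(t_1)\kappa_{j_1}(t_1)$, the factor $\kappa_{j_1}(t_1)$ cancels. The sum over $j_1$ then produces $\sum_{j_1}T_{j_2j_1}g_{j_1}(t_1)=\ktrue_{j_2}(t_2)$, so
\[
\E_\Pi[e^{-\Delta s}]=\sum_j \frac{k_j(t_2)}{g_j(t_2)}\,\ktrue_j(t_2).
\]
Normalization gives $\sum_j k_j(t_2)=\sum_j k_j(t_2)\,g_j(t_2)/g_j(t_2)=1$. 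Subtracting this yields \eqref{eq:mismatch}.

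\emph{Step 2: Cauchy--Schwarz.} Set $a_j=(\ktrue_j-g_j)/\sqrt{g_j}$ and $b_j=k_j(t_2)/\sqrt{g_j}$, all evaluated at $t_2$. Then \eqref{eq:mismatch} reads $\E_\Pi[e^{-\Delta s}]-1=\sum_j a_jb_j$. Cauchy--Schwarz gives $(\sum_j a_jb_j)^2\le\sum_j a_j^2\,\sum_j b_j^2$. Here $\sum_j a_j^2=\Dchi(\ktrue\|g)$ and $\sum_j b_j^2=\sum_j k_j^2/g_j$. Dividing by the latter, which is strictly positive because $k(t_2)$ is normalized, gives \eqref{eq:chi2}.

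\emph{Step 3: support issues.} The only delicate point is indices with $g_j(t_2)=0$. By absolute continuity, $k_j(t_2)=0$ there, so $b_j$ and the summand are set to zero. If $\ktrue_j>0$ on such an index, $\Dchi=+\infty$ and the bound is trivial. Restricting all sums to $\mathrm{supp}\,g(t_2)$ therefore loses nothing.

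\emph{Step 4: the operational claim.} The right-hand side of \eqref{eq:chi2} involves only the measured statistics of $\Delta s$ under $\Pi$, the assumed $g(t_2)$, and $k(t_2)$. It therefore needs no knowledge of $\ktrue$.

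I expect no serious obstacle. The one step needing care is the bookkeeping in Step 1, specifically confirming that the $\Pi$-weights combine with $T$ to produce $\ktrue$ rather than $g(t_2)$.
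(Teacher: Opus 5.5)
Your route is the paper's own proof. You expand $\E_\Pi[e^{-\Delta s}]$, let $Tg(t_1)=\ktrue(t_2)$ collapse the inner sum to get \eqref{eq:mismatch}, and apply Cauchy--Schwarz to the same pair $(\ktrue_j-g_j)/\sqrt{g_j}$, $k_j(t_2)/\sqrt{g_j}$. Your treatment of indices with $g_j(t_2)=0$ is fine. However, Step~3 puts the delicate support issue in the wrong place: it sits at $t_1$, not $t_2$.

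The cancellation of $\kappa_{j_1}(t_1)$ in Step~1 is illegitimate for any $j_1$ with $k_{j_1}(t_1)=0<g_{j_1}(t_1)$. There $\Pi(j_1,\cdot)=0$, so these pairs contribute nothing to $\E_\Pi[e^{-\Delta s}]$; the integrand $\kappa_{j_2}(t_2)/\kappa_{j_1}(t_1)$ is infinite only on a $\Pi$-null set. After your cancellation, though, they contribute $g_{j_1}(t_1)T_{j_2j_1}\kappa_{j_2}(t_2)$, which can be positive. What one actually gets is
\[
\E_\Pi\big[e^{-\Delta s}\big]=\sum_{j}\kappa_j(t_2)\sum_{j_1:\,k_{j_1}(t_1)>0}T_{jj_1}\,g_{j_1}(t_1)\ \le\ \sum_j\kappa_j(t_2)\,\ktrue_j(t_2),
\]
with strict inequality in general. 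Once this happens, the theorem itself fails. Take $g(t_1)=g(t_2)=(1/2,1/2)$ and let $T$ have all entries $1/2$, so $\ktrue(t_2)=g(t_2)$ and there is no mismatch at all. Take $k(t_1)=(1,0)$. Then:
\begin{itemize}
\item $k(t_2)=(1/2,1/2)$ and $\Delta s=\ln 2$ on the support of $\Pi$;
\item $\E_\Pi[e^{-\Delta s}]=1/2$ and $\sum_j k_j^2(t_2)/g_j(t_2)=1$;
\item so the right-hand side of \eqref{eq:chi2} is $1/4$, while $\Dchi(\ktrue\|g)=0$.
\end{itemize}

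The missing ingredient is the hypothesis that $k_j(t_1)>0$ whenever $g_j(t_1)>0$, i.e.\ $\mathrm{supp}\,k(t_1)=\mathrm{supp}\,g(t_1)$. This makes every $s_j(t_1)$ finite, i.e.\ there is no absolute irreversibility. Under it the cancellation is valid term by term, and your Steps~1--4 form a complete proof.

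The paper's proof performs the identical cancellation and relies on the same tacit assumption, arguably built into defining $s_j=-\ln\kappa_j$ for every microstate. So this is a hypothesis the statement should carry, not a flaw peculiar to your argument. It is not vacuous: a pure probe against a full-support Gibbs reference violates it. The ground-state initialization $k(t_1)=(1,0)$ in the paper's own two-level example is exactly such a case.
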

This result runs opposite to the usual role of the $\chi^{2}$ divergence in
Monte Carlo and importance sampling, where a \emph{known} $\chi^{2}(x\|y)$
\emph{upper}-bounds the degradation of an estimator built from a mismatched
proposal~\cite{ChatterjeeDiaconis}. Equation~\eqref{eq:chi2} instead turns an
\emph{observed} violation of \eqref{eq:ift} into a certified \emph{lower} bound
on an otherwise inaccessible divergence, with no independent knowledge of
$\ktrue$.

The bound bypasses entirely the need to physically prepare the reference state,
an operational advantage on platforms such as driven mesoscopic devices, quantum processors, or
complex biophysical systems, where the target reference ensemble can be
inaccurate~\cite{Proctor2020}, costly to prepare~\cite{Motta2020}, or simply
inaccessible~\cite{Haah2016,Dudko2008}.
Instead, it enables the mismatch between the assumed reference and its true dynamical evolution to be quantified from measurements performed on arbitrary, experimentally accessible states $k(t)$.
The probe state is moreover a tunable
diagnostic. Its denominator obeys the identity
$\sum_j k_j^2(t_2)/g_j(t_2)=1+\Dchi\!\big(k(t_2)\|g(t_2)\big)$, so that
\eqref{eq:chi2} reads
$\Dchi(\ktrue\|g)\ge(\E_\Pi[e^{-\Delta s}]-1)^2/[1+\Dchi(k(t_2)\|g(t_2))]$. The
r.h.s. of \eqref{eq:mismatch} is a $k(t_2)$-weighted average of the local
reference mismatch $(\ktrue_j-g_j)/g_j$: spreading the probe uniformly dilutes a
localized discrepancy against microstates where the reference is accurate,
whereas concentrating $k(t_2)$, through the choice of the initial probe and the
assumed dynamics, on microstates suspected of control errors or unmodeled noise
sharpens the numerator.
Because concentrating the probe modifies the
denominator, the tightest bound optimizes the ratio rather than the numerator
alone---a genuine trade-off that a judiciously engineered probe resolves,
exposing localized discrepancies that a uniformly spread probe would average
away.

Finally, although $\Dchi$ is not a distance, it induces one in the
small-deviation limit. Writing $\ktrue-g=:\epsilon v$ with $\sum_j v_j=0$ and
$\epsilon$ small, the $\chi^{2}$ divergence reduces to
$\Dchi(\ktrue\|g)=\epsilon^{2}\DRF(v,v)$, with
$\DRF(v,v):=\sum_j v_j^2/g_j$ the Rao--Fisher metric evaluated at
$g$~\cite{Rao1945,amari2000methods}; for the $\chi^{2}$ divergence this quadratic form is exact
along the ray $g+\epsilon v$. In this limit \eqref{eq:chi2} certifies a lower
bound on the Rao--Fisher distance between the assumed and true reference
trajectories, making explicit a feature that runs through the whole
construction: the reference $g$ is an active geometric object, setting the
metric with which nearby departures from the assumed dynamics are measured~\cite{crooks2007measuring}.

\emph{Example: a driven two-level system.---}%
The simplest setting in which a genuinely time-dependent reference arises is a
classical two-level system---the dissipative counterpart of Boltzmann's
two-level unit---with energy gap $\epsilon(t) \equiv E_+(t) - E_-(t)$ swept in time and in contact
with a thermal bath. Its populations obey a Pauli master equation (a classical
Markov-jump, or Liouville, generator) whose transition rates satisfy detailed
balance at inverse temperature $\beta$, so that at each instant the unique
stationary state is the instantaneous Gibbs distribution
$\mathcal{G}(t) := \left(1-\mathcal{G}_+(t), \mathcal{G}_+(t)\right)$, with $\mathcal{G}_+(t) = \left[1+e^{\beta E_+(t)}\right]^{-1}$. In this model, the excited-state population of $k(t)$ relaxes according to
$\dot k_+(t)=-\Gamma\,[\,k_+(t)-\mathcal{G}_+(t)\,]$, and a large relaxation rate $\Gamma$
relative to the sweep speed makes the state track $\mathcal{G}(t)$ adiabatically.
Preparing the system at $t_1$ in the stationary state and reading it out at
$t_2$, we form the reference-relative entropy production and average it over
the physical joint law~\eqref{eq:Pi}.

\begin{figure}[t]
\centering
\includegraphics[width=\linewidth]{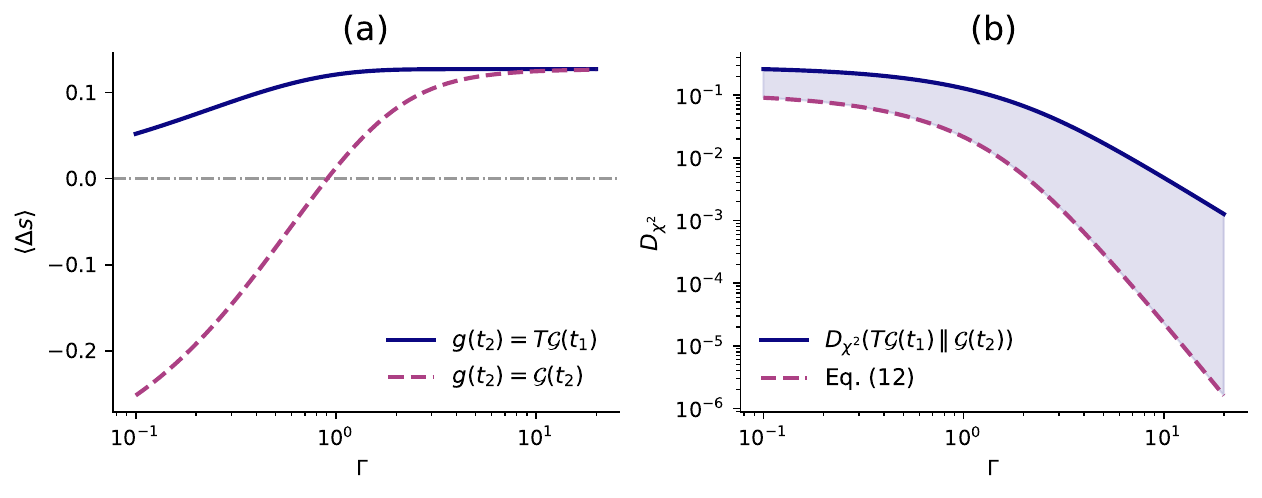}
\caption{Driven two-level system, with linearly decreasing energy gap $\epsilon(t) = \epsilon_0 + (\epsilon_1 - \epsilon_0) t$, $\epsilon_0 = 2, \,\epsilon_1 = 1/2$. Furthermore, we choose the temperature $\beta = 1$ and the initial distribution $k(t_1) = (1,0)$, i.e. the ground state; finally, without loss of generality (one could always rescale the rate $\Gamma$ to ensure this condition), we simply set the initial time $t_1 = 0$ and the final time $t_2=1$. a) Mean reference-relative entropy production
$\langle\Delta s\rangle$ versus relaxation rate $\Gamma$ (adiabaticity
increasing to the right), for the true propagated reference $g(t_2) = T \mathcal{G}(t_1)$
(solid) and for the instantaneous Gibbs reference $\mathcal{G}(t_2)$ (dashed). 
The
former obeys the second law exactly; the latter shows an apparent violation,
$\langle\Delta s\rangle<0$, that measures the non-adiabatic lag. (b) The true
reference mismatch $D_{\chi^2}(T \mathcal{G}(t_1)\|\mathcal{G}(t_2)$ (solid) and the lower bound
certified by the fluctuation-theorem violation through Theorem~\ref{res:chi2}
(dashed); the bound holds at every driving speed.}
\label{fig:qubit}
\end{figure}

Two choices of reference make the framework's content explicit
(Fig.~\ref{fig:qubit}). Taking $g(t_2)$ to be the \emph{true} reference $g_{\mathrm{true}}(t_2)$ obtained by
propagating $g(t_1) = \mathcal{G}(t_1)$ under the same channel obtained by solving the above master equation, Theorem~\ref{res:ift} holds exactly,
and with it the second law $\E_\Pi[\Delta s]\ge0$, at every driving speed
[Fig.~\ref{fig:qubit}(a), solid]. 
Taking instead the
\emph{instantaneous} Gibbs state $\mathcal{G}(t)$ to be reference, i.e. $g(t_2) = \mathcal{G}(t_2)$, the true propagated reference $g_\mathrm{true}(t)$ lags behind for any finite-speed driving, 
Theorem~\ref{res:ift} is violated, 
and $\E_\Pi[\Delta s]$ turns \emph{negative}
[Fig.~\ref{fig:qubit}(a), dashed], an apparent breakdown of the Second Law whose magnitude grows as the drive is made faster and vanishes in the
adiabatic limit. This is not a paradox but a diagnostic: by
Theorem~\ref{res:chi2}, the measured violation lower-bounds the
$\chi^2$ divergence between the true and the assumed reference
[Fig.~\ref{fig:qubit}(b)], certifying the nonadiabatic lag with no independent
characterization of the true dynamics. Dropping the external time dependence
returns $\mathcal{G}$ to the fixed Gibbs state and recovers the standard entropy
production and its ordinary fluctuation relation as the stationary limit of the
construction.

\emph{Conclusions.---}%
In this work we have shown that state convertibility relative to an arbitrary,
possibly time-dependent reference $g(t)$ is governed by $g(t)$-majorization,
which admits an exact dual description: each population vector maps to the
parent distribution of its relative populations, and convertibility reduces to
a one-dimensional convex-order condition, checkable through a martingale
coupling rather than through the construction and comparison of Lorenz curves
(Theorem~\ref{res:convert}). This unifies majorization and thermo-majorization
as the uniform- and Gibbs-reference instances of a single ordering and extends
both to references with no equilibrium or symmetry structure. Building on the
same construction, the reference-relative surprisal defines an entropy
production obeying an exact fluctuation theorem (Theorem~\ref{res:ift}), whose
average violation certifies, through a $\chi^{2}$-divergence bound, a lower
bound on the mismatch between an assumed and the true reference
(Theorem~\ref{res:chi2})---a diagnostic that needs no independent
characterization of the true dynamics and that we demonstrated on a driven
two-level system, where it converts an apparent violation of the second law
into a measure of nonadiabatic driving.

Two directions follow concretely. First, the fluctuation-theorem violation of
Theorem~\ref{res:chi2} is directly measurable: for a driven two-level system or
a Markov-jump network whose stationary state is only approximately known, a
single ensemble of forward trajectories fixes $\E_\Pi[e^{-\Delta s}]$ and hence
a certified lower bound on the reference error, a nonequilibrium alternative to
preparing and tomographing the stationary state. Second, because the
denominator of Theorem~\ref{res:chi2} equals $1+D_{\chi^2}(k(t_2)\|g(t_2))$, the
probe that maximizes the certified bound is itself determined by the reference,
defining an optimization---concentrating population where the reference is
least trusted---that is worth characterizing in full. 

\section*{Acknowledgments}
G.G. kindly acknowledges support from the Ministero dell’Università e della Ricerca (MUR) under the “Rita Levi-Montalcini” grant and from INFN.

\bibliography{bibliography}

\newpage
\onecolumngrid
\vspace{1em}
\begin{center}\rule{0.5\linewidth}{0.4pt}\end{center}
\vspace{0.5em}
\twocolumngrid

\section*{End Matter}
\emph{The Lorenz-curve criterion.---}%
In this End Matter we want to explicitly show how state convertibility admits an exact characterization directly in terms of the population vector,
in the classical language of majorization. Introduce the non-increasing step function
\begin{align}\label{eq:gammaEM}
    \Gamma_{\kappa}(\lambda;\,t) := \sum_{j = 1}^m \kappa_m^{\downarrow}(t) \,\mathds{1}_{\{\sum_{i=1}^{m-1}g_i^{\downarrow}(t) \leq \lambda < \sum_{i=1}^m g_i^{\downarrow}(t)\}}\,,
\end{align}
with $\lambda \in[0,1]$.
Here, the superscript $\downarrow$ indicates that the components are reordered according to the non-increasing ordering of the population vector $\kappa(t)$. More precisely, let $\pi_t$ be a permutation such that
\begin{align}
    \kappa_{\pi_t(1)}(t)
    \geq
    \kappa_{\pi_t(2)}(t)
    \geq \cdots \geq
    \kappa_{\pi_t(d)}(t).
\end{align}
Then, for any vector $v$, we define
\begin{align}
    v_m^\downarrow(t) := v_{\pi_t(m)}(t).
\end{align}
Thus, both $\kappa^\downarrow(t)$ and $g^\downarrow(t)$ are expressed in the ordering induced by $\kappa(t)$: the components of $\kappa(t)$ are arranged in non-increasing order, and the components of $g(t)$ are rearranged according to the same permutation.
 Its running integral
$L_\kappa(\Lambda;t):=\int_0^\Lambda \Gamma_\kappa(\lambda;t)\,d\lambda$ is the Lorenz
curve of $k$ taken relative to $g$: concave, non-decreasing, and pinned at
$L(0;t)=0$, $L(1;t)=1$. A transition matrix satisfying the admissibility
conditions~\eqref{eq:convertibility} exists if and only if~\cite{SM}
\begin{equation}
L_\kappa(\Lambda;t_1)
\ \ge\
L_\kappa(\Lambda;t_2)
\qquad\forall\,\Lambda\in[0,1],
\label{eq:lorenz}
\end{equation}
that is, the earlier reference Lorenz curve lies everywhere above the later one.
This criterion is the exact statement of \emph{majorization relative to the
reference $g$}---the classical notion of $d$-majorization or relative
majorization~\cite{Veinott1971,Ruch1980,Blackwell1953}, here made
time-dependent. It unifies and generalizes the two familiar cases in a single
inequality. When $g$ is uniform (the microcanonical case, microstates of equal
energy), the slabs have equal width and \eqref{eq:lorenz} reduces at the
breakpoints to the standard majorization condition
$\sum_{i\le\ell}\kappa^{\downarrow}_{i}(t_1)\ge\sum_{i\le\ell}\kappa^{\downarrow}_{i}(t_2)$.
When $g$ is a fixed Gibbs state (the canonical case), it becomes the
thermo-majorization criterion of the resource theory of
thermodynamics~\cite{Horodecki2013,brandao2013resource,gour2015resource,Renes2016,BuscemiGour2017}, in its exact,
non-approximated form. For a general and possibly time-dependent $g(t)$,
\eqref{eq:lorenz} extends both to references with no equilibrium or symmetry
structure.
The criterion is equivalent to Theorem~\ref{res:convert}: sorting-based Lorenz
dominance, $g(t)$-majorization, the convex order of parent distributions, and the
existence of a martingale coupling are one and the same relation. The
equivalence is proved in full detail in the Supplemental Material~\cite{SM}, where the
staircase construction of \eqref{eq:gammaEM} and the reference Lorenz curves are
also illustrated. Its practical drawback, and the reason it is not the criterion
we adopt in the main text, is that evaluating $\Gamma_\kappa$ requires sorting the relative populations at each time.

\widetext

\appendix

\begin{center}
\textbf{\large Supplementary Material}\\
\end{center}
\setcounter{equation}{0}
\setcounter{figure}{0}
\setcounter{table}{0}
\setcounter{page}{1}
\makeatletter
\renewcommand{\theequation}{S\arabic{equation}}
\renewcommand{\thefigure}{S\arabic{figure}}

In this Supplementary Material we provide all the detailed proofs of the necessity and sufficiency of the Lorenz Curve criterion, and of Theorems 1 and 3 of the main text.

\section{A --- Proof of the Necessity and Sufficiency of the Lorenz-Curve Criterion}

We hereby prove the following statement:
there exists a stochastic matrix $T$ with
\begin{align}\label{eq:T_conditions}
    Tg(t_1)=g(t_2)\wedge Tk(t_1)=k(t_2)
\end{align}
if and only if
\begin{align}\label{eq:lorenz_conditions}
    L_\kappa(\Lambda;t_1)\ \ge\ L_\kappa(\Lambda;t_2)\,\text{for all }\Lambda\in[0,1]\,,
\end{align}
starting from the following two lemmas.

\begin{lemma}
\label{lem:sorting-reduction}
Let $P_{t}\in\mathbb R^{m\times m}$ be the permutation matrix with $(P_tv)_j=v_{\pi_t(j)}$ for every vector $v$. If there exists a stochastic $\widetilde T$ with
\[
\widetilde T\, g^\downarrow(t_1)=g^\downarrow(t_2), \qquad \widetilde T\, k^\downarrow(t_1)=k^\downarrow(t_2),
\]
then $T:=P_{t_2}^{-1}\widetilde T P_{t_1}$ is stochastic and satisfies $Tg(t_1)=g(t_2)$, $Tk(t_1)=k(t_2)$.
\end{lemma}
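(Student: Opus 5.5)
The plan is a direct verification that uses only the algebraic properties of permutation matrices. There are two things to establish: that $T$ is column-stochastic, and that it maps $g(t_1)$ to $g(t_2)$ and $k(t_1)$ to $k(t_2)$.

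\emph{Step 1: relate sorted and unsorted vectors.} By definition, $v^\downarrow$ at time $t$ is $v_{\pi_t(\cdot)}$, so $v^\downarrow(t)=P_t v(t)$ for any vector $v$. In particular this gives
\[
g^\downarrow(t_i)=P_{t_i}g(t_i), \qquad k^\downarrow(t_i)=P_{t_i}k(t_i), \qquad i=1,2 .
\]
Both vectors are permuted by the same $\pi_{t_i}$, namely the one induced by $\kappa(t_i)$. This is exactly the convention of the End Matter, in which $g$ is rearranged according to the ordering of $\kappa$.

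\emph{Step 2: the intertwining relations.} We compute
\[
Tg(t_1)=P_{t_2}^{-1}\widetilde T P_{t_1}g(t_1)=P_{t_2}^{-1}\widetilde T g^\downarrow(t_1)=P_{t_2}^{-1}g^\downarrow(t_2)=g(t_2).
\]
The same computation with $k$ in place of $g$ gives $Tk(t_1)=k(t_2)$. The only fact used is that $P_{t_2}^{-1}$ undoes the reordering, i.e. $P_{t_2}^{-1}P_{t_2}=\mathds{1}$.

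\emph{Step 3: stochasticity.} Every entry of $T$ is nonnegative, because $T_{ab}=\widetilde T_{\pi_{t_2}^{-1}(a)\,\pi_{t_1}^{-1}(b)}$ (up to the orientation convention). So $T$ is $\widetilde T$ with its rows and columns permuted, and its entries form the same set. Column-stochasticity, which is the convention of Eq.~\eqref{eq:convertibility}, can be written as $\mathbf 1^\top T=\mathbf 1^\top$. Using $\mathbf 1^\top P^{-1}=\mathbf 1^\top$ and $\mathbf 1^\top P=\mathbf 1^\top$ for any permutation matrix $P$, together with $\mathbf 1^\top\widetilde T=\mathbf 1^\top$, we get
\[
\mathbf 1^\top T=\mathbf 1^\top\widetilde T P_{t_1}=\mathbf 1^\top P_{t_1}=\mathbf 1^\top .
\]
If instead row-stochasticity were intended, the same argument works with $T\mathbf 1=\mathbf 1$, since $P\mathbf 1=\mathbf 1$.

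\emph{Main obstacle.} There is no real difficulty here. The only point needing care is bookkeeping: $\pi_{t_1}$ and $\pi_{t_2}$ are in general different permutations, since each is induced by the $\kappa$-ordering at its own time. The relabeling therefore has to be undone on the output side with $P_{t_2}^{-1}$ and on the input side with $P_{t_1}$, and the step that checks this pairing is Step 1.
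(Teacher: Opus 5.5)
Your proposal is correct and follows essentially the same route as the paper. The paper also obtains the intertwining relations from $P_{t_1}g(t_1)=g^\downarrow(t_1)$ and $P_{t_2}^{-1}g^\downarrow(t_2)=g(t_2)$. It gets stochasticity from the facts that permutation matrices are stochastic and that products of stochastic matrices are stochastic; your $\mathbf 1^\top P=\mathbf 1^\top$ argument is that same column-sum computation written in vector form.
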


\begin{proof}
Permutation matrices have exactly one $1$ per row and column, hence are stochastic, and $P_{t_2}^{-1}=P_{t_2}^{\mathsf T}$ is again a permutation matrix, hence stochastic. If $A,B$ are stochastic then so is $AB$: 
\[
\sum_i(AB)_{ij}
=\sum_i\sum_\ell A_{i\ell}B_{\ell j}
=\sum_\ell B_{\ell j}\bigg(\sum_iA_{i\ell}\bigg)
=\sum_\ell B_{\ell j}
=1.
\]
Hence $T=P_{t_2}^{-1}\widetilde TP_{t_1}$ is stochastic. Moreover $P_{t_1}g(t_1)=g^\downarrow(t_1)$ by definition of $P_{t_1}$, so
\[
Tg(t_1)
=P_{t_2}^{-1}\widetilde T P_{t_1}g(t_1)
=P_{t_2}^{-1}\widetilde T g^\downarrow(t_1)
=P_{t_2}^{-1}g^\downarrow(t_2)
=g(t_2),
\]
using $P_{t_2}^{-1}g^\downarrow(t_2)=g(t_2)$ (the inverse rearrangement). The identity $Tk(t_1)=k(t_2)$ follows identically.
\end{proof}

By Lemma~\ref{lem:sorting-reduction} it suffices to prove the theorem under the standing assumption that $g(t_1),k(t_1),g(t_2),k(t_2)$ are \emph{already} arranged in the order induced by their own $\kappa$, i.e.\ $\kappa_1(t)\ge\cdots\ge\kappa_m(t)$ for $t=t_1,t_2$; we adopt this assumption for the remainder of the proof and drop the $\downarrow$ superscripts.

\begin{lemma}
\label{lem:finite-checkpoints}
Condition~\eqref{eq:lorenz_conditions} holds for all $\Lambda\in[0,1]$ if and only if it holds at every point of the common refinement
\[
\left\{\sum_{j=1}^r g_j(t_1)\right\}_{r=0}^m
\cup
\left\{\sum_{j=1}^r g_j(t_2)\right\}_{r=0}^m.
\]
\end{lemma}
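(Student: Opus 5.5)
The plan is to use the structure of the two Lorenz curves directly: each is piecewise linear in $\Lambda$, so their difference is piecewise linear on the common refinement, and a piecewise linear function is nonnegative everywhere iff it is nonnegative at its breakpoints.

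\emph{Step 1 (shape of $L_\kappa$).} For each $t$, under the standing sorted assumption, $\Gamma_\kappa(\lambda;t)$ is constant on each slab $[G_{r-1}(t),G_r(t))$ with $G_r(t):=\sum_{j\le r}g_j(t)$, where it takes the value $\kappa_r(t)$. Slabs with $g_r(t)=0$ are empty and can be ignored. Hence $L_\kappa(\cdot;t)$ is continuous and affine on each $[G_{r-1}(t),G_r(t)]$, with slope $\kappa_r(t)$. Its breakpoints are contained in $\{G_r(t)\}_{r=0}^m$.

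\emph{Step 2 (difference is affine between checkpoints).} Let $0=\Lambda_0<\Lambda_1<\dots<\Lambda_N=1$ enumerate the common refinement in the statement. On each interval $[\Lambda_i,\Lambda_{i+1}]$, no breakpoint of either curve lies in the interior. So both $L_\kappa(\cdot;t_1)$ and $L_\kappa(\cdot;t_2)$ are affine there, and so is their difference
\[
\Delta(\Lambda):=L_\kappa(\Lambda;t_1)-L_\kappa(\Lambda;t_2).
\]

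\emph{Step 3 (conclusion).} The forward direction is trivial, since the checkpoints lie in $[0,1]$. Conversely, suppose $\Delta(\Lambda_i)\ge0$ for all $i$. Any $\Lambda\in[0,1]$ lies in some $[\Lambda_i,\Lambda_{i+1}]$ and can be written as $\Lambda=(1-\theta)\Lambda_i+\theta\Lambda_{i+1}$ with $\theta\in[0,1]$. By affinity,
\[
\Delta(\Lambda)=(1-\theta)\Delta(\Lambda_i)+\theta\Delta(\Lambda_{i+1})\ge0 .
\]
This proves the claim. Concavity of the individual curves is not needed; only piecewise linearity with known breakpoints is used.

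The only delicate point is Step 1's bookkeeping. Zero-weight slabs, and ties in $\kappa$ (which make the sorting permutation non-unique), must not create hidden kinks. Continuity of the running integral handles both: an empty slab contributes no interval, and tied values just merge adjacent affine pieces with equal slope, so the set of kinks is still contained in $\{G_r(t)\}$. One should also note that $\Gamma_\kappa$ as written uses $\kappa_m^\downarrow$ with summation index $j$, evidently meaning $\kappa_j^\downarrow$ and $g_i^\downarrow$ partial sums up to $j$. I will read it that way, which is the main (notational) obstacle rather than a mathematical one.
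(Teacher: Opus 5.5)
Your proposal is correct and follows essentially the same route as the paper: both Lorenz curves are affine between consecutive points of the combined breakpoint set, so their difference is affine there, and nonnegativity at the two endpoints of each interval extends to the whole interval. Your additional bookkeeping on empty slabs, ties in $\kappa$, and the index typo in $\Gamma_\kappa$ makes explicit details that the paper's proof leaves implicit.
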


\begin{proof}
Necessity is trivial. For sufficiency: on each interval between two consecutive points of the combined breakpoint set, both $L_\kappa(\cdot\,;t_1)$ and $L_\kappa(\cdot\,;t_2)$ are linear (each is linear between its \emph{own} breakpoints, and the combined set refines both), so their difference is affine on that interval. An affine function that is $\ge 0$ at both endpoints of an interval is $\ge 0$ throughout the interval, since it attains its extrema at the endpoints. Hence non-negativity at the finitely many combined breakpoints implies non-negativity everywhere.
\end{proof}

We proceed in two steps: first, we prove that the existence of a stochastic matrix satisfying Eq.~\eqref{eq:T_conditions} implies the Lorenz-curve condition~\eqref{eq:lorenz_conditions}; subsequently, we prove the converse implication.

\subsubsection*{Step 1: Proving \eqref{eq:T_conditions} $\Rightarrow$ \eqref{eq:lorenz_conditions}}

\begin{lemma}[Variational form of $L_\kappa$]
\label{lem:variational}
For every $t\in\{t_1,t_2\}$ and $\Lambda\in[0,1]$,
\begin{align}
\label{eq:LP}
L_\kappa(\Lambda;t)
=
\max\Big\{
\sum_{j=1}^m\theta_j k_j(t)
:\,
\theta\in[0,1]^m,\ 
\sum_{j=1}^m\theta_j g_j(t)\le\Lambda
\Big\}.
\end{align}
\end{lemma}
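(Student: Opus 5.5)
Since $L_\kappa(\cdot;t)$ is the integral of a non-increasing step function, the plan is to prove the identity as a fractional-knapsack statement: a greedy choice of $\theta$ attains the value $L_\kappa(\Lambda;t)$, and an exchange argument shows that no feasible $\theta$ does better. Throughout, $t$ is fixed and the entries are labelled in the standing order $\kappa_1\ge\cdots\ge\kappa_m$. Let $G_r:=\sum_{j\le r}g_j$, and use $k_j=\kappa_j g_j$ (for $g_j=0$, absolute continuity gives $k_j=0$, so such indices contribute nothing to either side and may be dropped). For $\Lambda\in[G_{r-1},G_r)$, integrating the staircase \eqref{eq:gammaEM} gives
\[
L_\kappa(\Lambda;t)=\sum_{j<r}\kappa_j g_j+\kappa_r(\Lambda-G_{r-1})=\sum_{j<r}k_j+\kappa_r(\Lambda-G_{r-1}),
\]
and $L_\kappa(1;t)=\sum_j k_j=1$.

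\emph{Lower bound (achievability).} Take the greedy vector $\theta^\ast_j=1$ for $j<r$, $\theta^\ast_r=(\Lambda-G_{r-1})/g_r\in[0,1)$, and $\theta^\ast_j=0$ for $j>r$. It satisfies $\theta^\ast\in[0,1]^m$ and $\sum_j\theta^\ast_j g_j=\Lambda$. Its objective is $\sum_j\theta^\ast_j\kappa_j g_j$, which equals the expression above. Hence the maximum is at least $L_\kappa(\Lambda;t)$.

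\emph{Upper bound (optimality).} Let $\theta$ be feasible and set $w_j:=\theta_j g_j$. Then $0\le w_j\le g_j$, $\sum_j w_j\le\Lambda$, and the objective is $\sum_j\kappa_j w_j$. Compare with $w^\ast_j:=\theta^\ast_j g_j$ and write
\[
\sum_j\kappa_j(w^\ast_j-w_j)=\sum_{j\le r}\kappa_j(w^\ast_j-w_j)-\sum_{j>r}\kappa_j w_j.
\]
\begin{itemize}
\item For $j<r$ we have $w^\ast_j=g_j\ge w_j$, so these terms satisfy $\kappa_j(w^\ast_j-w_j)\ge\kappa_r(w^\ast_j-w_j)$, using $\kappa_j\ge\kappa_r$.
\item For $j=r$ the term equals $\kappa_r(w^\ast_r-w_r)$ exactly.
\item For $j>r$, $\kappa_j\le\kappa_r$ and $w_j\ge0$ give $\kappa_j w_j\le\kappa_r w_j$.
\end{itemize}
Putting these together,
\[
\sum_j\kappa_j(w^\ast_j-w_j)\ge\kappa_r\Big(\sum_{j\le r}w^\ast_j-\sum_j w_j\Big)=\kappa_r\Big(\Lambda-\sum_j w_j\Big)\ge0,
\]
because $\kappa_r\ge0$ and $\sum_j w_j\le\Lambda$. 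Therefore every feasible objective is at most $L_\kappa(\Lambda;t)$. The endpoint $\Lambda=1$ is handled the same way with $r=m$, where $\theta^\ast\equiv1$.

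\emph{Main obstacle.} The only delicate step is the $j=r$ index in the upper bound, which must be treated with exact equality rather than sign arguments. It also needs care at ties $\kappa_j=\kappa_r$ and at zero-weight indices. Ties need no special treatment, since the inequalities are non-strict. Zero-weight indices are removed beforehand by absolute continuity, which also guarantees that $\Gamma_\kappa$ is well defined on each slab.
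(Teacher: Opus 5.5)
Your proof is correct and takes the paper's route: the paper also reads the right-hand side as a fractional knapsack problem and notes that the greedy fill, in decreasing order of $\kappa_j$ with the last item taken fractionally, reproduces the staircase integral $L_\kappa(\Lambda;t)$. The difference is that the paper only cites the optimality of the greedy choice as classical, whereas you prove it with the exchange inequality $\sum_j\kappa_j(w^\ast_j-w_j)\ge\kappa_r(\Lambda-\sum_j w_j)\ge 0$ and treat zero-weight indices and ties explicitly.
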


\begin{proof}
This is the classical fractional knapsack problem, with item $j$ having weight $g_j(t)$ and value $k_j(t)=\kappa_j(t)g_j(t)$. Since the ratios of value to weight are ordered as
\[
\kappa_1(t)\ge\cdots\ge\kappa_m(t),
\]
the optimal strategy is to select the items in decreasing order of $\kappa_j(t)$, taking each item fully until the capacity $\Lambda$ is reached and, if necessary, taking a fractional amount of the final item. This is precisely the construction of $L_\kappa(\Lambda;t)$.
\end{proof}

\begin{proposition}
\label{prop:necessity}
If a stochastic matrix $T$ exists such that
\[
Tg(t_1)=g(t_2),
\qquad
Tk(t_1)=k(t_2),
\]
then \eqref{eq:lorenz_conditions} holds for all $\Lambda\in[0,1]$.
\end{proposition}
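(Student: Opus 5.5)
The plan is to use the variational form of Lemma~\ref{lem:variational} and pull an optimal $t_2$-feasible vector back through $T$. Fix $\Lambda\in[0,1]$, and let $\theta^\star\in[0,1]^m$ attain the maximum in \eqref{eq:LP} at time $t_2$. Then
\[
\sum_j\theta^\star_j g_j(t_2)\le\Lambda
\qquad\text{and}\qquad
L_\kappa(\Lambda;t_2)=\sum_j\theta^\star_j k_j(t_2).
\]
Define $\theta'_n:=\sum_j\theta^\star_j T_{jn}$, i.e.\ $\theta'=T^{\mathsf T}\theta^\star$. We will then show three things in turn.

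First, $\theta'\in[0,1]^m$. Nonnegativity is clear, since $T_{jn}\ge0$ and $\theta^\star_j\ge0$. For the upper bound, each $\theta'_n$ is a convex combination of entries of $\theta^\star$ (column-stochasticity gives $\sum_jT_{jn}=1$), so it is at most $1$.

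Second, $\theta'$ is feasible at $t_1$. Reference preservation gives
\[
\sum_n\theta'_n g_n(t_1)=\sum_j\theta^\star_j (Tg(t_1))_j=\sum_j\theta^\star_j g_j(t_2)\le\Lambda .
\]

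Third, $\theta'$ has the same objective value. Using $Tk(t_1)=k(t_2)$,
\[
\sum_n\theta'_n k_n(t_1)=\sum_j\theta^\star_j k_j(t_2)=L_\kappa(\Lambda;t_2).
\]

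Since $\theta'$ is admissible in the maximization defining $L_\kappa(\Lambda;t_1)$, we obtain $L_\kappa(\Lambda;t_1)\ge L_\kappa(\Lambda;t_2)$ for every $\Lambda$, which is the claim.

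The only point needing care is that Lemma~\ref{lem:variational} was stated under the sorted convention. After Lemma~\ref{lem:sorting-reduction}, we should therefore note that the variational expression \eqref{eq:LP} is invariant under relabeling the microstates: the knapsack value does not depend on how the items are indexed. It can thus be applied to the original, unsorted vectors at each time. This makes it legitimate to use the given $T$ directly, without first conjugating it by the permutations $P_{t_1}$ and $P_{t_2}$.

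Apart from that, no step is a real obstacle. The argument is a duality/pullback: stochastic maps send feasible test vectors at $t_2$ to feasible test vectors at $t_1$, so the value $L_\kappa$ can only decrease from $t_1$ to $t_2$.
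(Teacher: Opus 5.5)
Your proof is correct and is essentially the paper's own argument: pull the optimal $t_2$ knapsack vector back as $\theta'=T^{\mathsf T}\theta^\star$, get $\theta'\in[0,1]^m$ from column-stochasticity, feasibility at $t_1$ from $Tg(t_1)=g(t_2)$, and an equal objective value from $Tk(t_1)=k(t_2)$. Your extra remark is also valid: the variational form is invariant under relabeling the microstates, so the given $T$ can be used directly without conjugating by $P_{t_1},P_{t_2}$, which settles a point the paper only handles through its sorted-coordinates convention.
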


\begin{proof}
Fix $\Lambda\in[0,1]$, and let $\theta^*\in[0,1]^m$ be a maximizer in \eqref{eq:LP} for $L_\kappa(\Lambda;t_2)$, so that
\begin{align}
L_\kappa(\Lambda;t_2)
&=
\sum_{i=1}^m\theta_i^*k_i(t_2),
&
\sum_{i=1}^m\theta_i^*g_i(t_2)
&\le\Lambda.
\end{align}
Define
\begin{align}
\phi_j:=\sum_{i=1}^m\theta_i^*T_{ij},
\qquad j\in[m].
\end{align}
Since $T_{ij}\ge0$, $\sum_iT_{ij}=1$, and $\theta_i^*\in[0,1]$, each $\phi_j$ is a convex combination of the $\theta_i^*$ and therefore $\phi\in[0,1]^m$.

Using $Tk(t_1)=k(t_2)$, we obtain
\begin{align}
\sum_{i=1}^m\theta_i^*k_i(t_2)
&=
\sum_{i=1}^m\theta_i^*\sum_{j=1}^mT_{ij}k_j(t_1)
=
\sum_{j=1}^m\phi_jk_j(t_1).
\end{align}
Similarly, using $Tg(t_1)=g(t_2)$,
\begin{align}
\sum_{j=1}^m\phi_jg_j(t_1)
&=
\sum_{i=1}^m\theta_i^*\sum_{j=1}^mT_{ij}g_j(t_1)
=
\sum_{i=1}^m\theta_i^*g_i(t_2)
\le\Lambda.
\end{align}
Hence $\phi$ is feasible for the maximization problem \eqref{eq:LP} defining $L_\kappa(\Lambda;t_1)$. Therefore,
\begin{align}
L_\kappa(\Lambda;t_2)
=
\sum_{j=1}^m\phi_jk_j(t_1)
\le
L_\kappa(\Lambda;t_1).
\end{align}
Since $\Lambda$ was arbitrary, \eqref{eq:lorenz_conditions} follows.\\
\end{proof}
\vspace{0.5cm}

\subsubsection*{Step 2: Proving \eqref{eq:lorenz_conditions} $\Rightarrow$ \eqref{eq:T_conditions}}
\noindent
Assume \eqref{eq:lorenz_conditions}, let
\[
\{0=\lambda_0<\lambda_1<\cdots<\lambda_N=1\}
\]
be the union of the breakpoints
\[
\left\{G_j^\downarrow(t_1)\right\}_{j=0}^m
\cup
\left\{G_j^\downarrow(t_2)\right\}_{j=0}^m,
\qquad
G_j^\downarrow(t):=\sum_{i=1}^j g_i^\downarrow(t),
\]
and define
\[
\ell_r:=\lambda_r-\lambda_{r-1}>0,
\qquad r=1,\ldots,N.
\]
Thus $N\leq 2m-1$ and $\sum_{r=1}^N\ell_r=1$. For $\lambda\in(\lambda_{r-1},\lambda_r)$, define
\[
a_r:=\Gamma_\kappa(\lambda;t_1),
\qquad
b_r:=\Gamma_\kappa(\lambda;t_2).
\]
Since $\Gamma_\kappa(\cdot;t_1)$ and $\Gamma_\kappa(\cdot;t_2)$ are non-increasing, both $(a_r)_{r=1}^N$ and $(b_r)_{r=1}^N$ are non-increasing. Moreover, defining
\[
J_j:=\left\{r:
(\lambda_{r-1},\lambda_r)
\subset
[G_{j-1}^\downarrow(t_1),G_j^\downarrow(t_1))
\right\},
\]
and
\[
\mathcal I_j:=\left\{r:
(\lambda_{r-1},\lambda_r)
\subset
[G_{j-1}^\downarrow(t_2),G_j^\downarrow(t_2))
\right\},
\]
the families $\{J_j\}_{j=1}^m$ and $\{\mathcal I_j\}_{j=1}^m$ partition $[N]$, and
\begin{align}
\sum_{r\in J_j}\ell_r&=g_j(t_1),
&
\sum_{r\in\mathcal I_j}\ell_r&=g_j(t_2),
\\
a_r&=\kappa_j(t_1)\quad(r\in J_j),
&
b_r&=\kappa_j(t_2)\quad(r\in\mathcal I_j).
\end{align}
By Lemma~\ref{lem:finite-checkpoints}, condition \eqref{eq:lorenz_conditions} is equivalent to
\begin{align}
\label{eq:atomic-majorization}
\sum_{r=1}^q\ell_r b_r
\leq
\sum_{r=1}^q\ell_r a_r,
\qquad q=1,\ldots,N,
\end{align}
with equality for $q=N$.
We now use the following weighted version of the Hardy--Littlewood--Pólya theorem.

\begin{lemma}[Weighted Hardy--Littlewood--Pólya theorem]
\label{lem:weighted-HLP}
Let $\ell_r>0$ with $\sum_r\ell_r=1$, and let $a,b\in\mathbb R^N$ be non-increasing. Then there exists a non-negative matrix $D\in\mathbb R^{N\times N}$ satisfying
\begin{align}
\sum_{r=1}^N D_{rs}=1,
\qquad
D\ell=\ell,
\qquad
D(\ell\odot a)=\ell\odot b,
\end{align}
if and only if
\begin{align}
\sum_{r=1}^q\ell_rb_r
\leq
\sum_{r=1}^q\ell_ra_r,
\qquad q=1,\ldots,N,
\end{align}
with equality for $q=N$.
\end{lemma}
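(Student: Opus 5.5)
The plan is to prove the two directions separately. Necessity will reuse the knapsack argument of Proposition~\ref{prop:necessity}. Sufficiency will go through a martingale coupling, which is then converted into $D$ by the same bijection used in Eq.~\eqref{eq:Tdef}.

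\emph{Necessity.} Suppose $D$ exists, and fix $q$. Let $\theta_r=\mathds{1}_{\{r\le q\}}$ and set $\phi_s:=\sum_r\theta_r D_{rs}$. Since the columns of $D$ sum to one and $D\ge0$, each $\phi_s$ is a convex combination of the $\theta_r$, so $\phi\in[0,1]^N$. The relation $D\ell=\ell$ gives $\sum_s\phi_s\ell_s=\sum_{r\le q}\ell_r$. The relation $D(\ell\odot a)=\ell\odot b$ gives $\sum_{r\le q}\ell_r b_r=\sum_s\phi_s\ell_s a_s$. The right-hand side is a feasible value of the fractional knapsack problem with weights $\ell_s$, values $\ell_s a_s$ and capacity $\sum_{r\le q}\ell_r$. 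Because $a$ is non-increasing, the optimum of that problem takes exactly the first $q$ items, exactly as in Lemma~\ref{lem:variational}. This yields $\sum_{r\le q}\ell_rb_r\le\sum_{r\le q}\ell_ra_r$. For $q=N$, the column-sum property gives $\sum_r\ell_rb_r=\sum_s\ell_sa_s$, which is the required equality.

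\emph{Sufficiency.} Define the probability measures $\mu=\sum_r\ell_r\delta_{a_r}$ and $\nu=\sum_r\ell_r\delta_{b_r}$ on $\mathbb R$. The hypothesis says the "quantile Lorenz curves" satisfy $\int_0^\Lambda F_\nu^{-1}\le\int_0^\Lambda F_\mu^{-1}$ at the breakpoints $\Lambda=\sum_{r\le q}\ell_r$, with equality at $\Lambda=1$. Here the quantile functions are the non-increasing step functions with values $a_r,b_r$ on the common slabs of widths $\ell_r$. Both curves are piecewise linear on these common slabs, so the argument of Lemma~\ref{lem:finite-checkpoints} extends the inequality to all $\Lambda\in[0,1]$.

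The key step is to convert this into the convex order $\mu\overset{\mathrm{cx}}{\succeq}\nu$. By equal means it suffices to show $\int(x-c)_+\,d\mu\ge\int(x-c)_+\,d\nu$ for every $c$. I will use the identity
\[
\int(x-c)_+\,d\mu=\sup_{\Lambda\in[0,1]}\Big[\int_0^\Lambda F_\mu^{-1}-c\Lambda\Big].
\]
Its proof is again the knapsack/greedy observation: one includes exactly the atoms exceeding $c$. Taking the supremum of the pointwise Lorenz inequality then gives the stop-loss ordering, and hence the convex order.

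Strassen's theorem then provides a coupling $\pi(s,r)$ with marginals $\sum_r\pi(s,r)=\ell_s$ and $\sum_s\pi(s,r)=\ell_r$, satisfying the martingale property $\sum_s a_s\pi(s,r)=\ell_rb_r$. To keep atoms with repeated values distinct, I will couple indices rather than values, splitting a value-level coupling proportionally among equal atoms. Setting $D_{rs}:=\pi(s,r)/\ell_s$, in analogy with Eq.~\eqref{eq:Tdef}, gives the three required properties:
\begin{itemize}
\item nonnegativity, since $\pi\ge0$ and $\ell_s>0$;
\item unit column sums, from the first marginal;
\item $D\ell=\ell$ and $D(\ell\odot a)=\ell\odot b$, from the second marginal and the martingale property respectively.
\end{itemize}

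The main obstacle is the passage from the Lorenz/quantile inequality to the convex order, namely the supremum identity above. A secondary concern is avoiding circularity with Theorem~\ref{res:convert}, which itself relies on Strassen. Should a self-contained argument be preferred, I would first try a constructive route by induction on $N$. At the first index where the partial-sum inequality becomes tight, or else at the pair of slabs where $a$ exceeds $b$, one applies a weighted Robin Hood transfer between two slabs. This transfer is a matrix of the required type that preserves $\ell$ and the monotonicity, and it increases the number of tight constraints. Composing finitely many such transfers then produces $D$, since products of these matrices keep the three properties.
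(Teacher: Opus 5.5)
Your proof is correct, but it follows a different route from the paper. The paper gives no proof of this lemma: it simply invokes it as the standard weighted Hardy--Littlewood--P\'olya theorem.

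Your necessity half is the knapsack argument of Proposition~\ref{prop:necessity}, run on the refined slabs. It needs one small repair. The lemma allows $a\in\mathbb R^N$ to have negative entries, and then a knapsack with capacity $\sum_{r\le q}\ell_r$ as an \emph{upper} bound is not maximized by the first $q$ items. There are two easy fixes:
\begin{itemize}
\item use the exact equality $\sum_s\phi_s\ell_s=\sum_{r\le q}\ell_r$ that you derived, after which an exchange argument works for any signs; or
\item first shift $a$ and $b$ by a common constant, which leaves both sides of the lemma invariant because $D\ell=\ell$.
\end{itemize}
In the paper's application $a_r=\kappa_j(t_1)\ge0$, so the issue does not arise there.

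For sufficiency you replace the cited HLP theorem by the chain: Lorenz dominance $\Rightarrow$ stop-loss order $\Rightarrow$ convex order $\Rightarrow$ Strassen martingale coupling $\Rightarrow$ $D$, via the analogue of Eq.~\eqref{eq:Tdef}. Your supremum identity is correct, and so is your proportional splitting of a value-level coupling among tied atoms.

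What this buys is conceptual. $\Gamma_\kappa(\cdot;t)$ is exactly the non-increasing quantile function of $p_\kappa(\cdot;t)$. So the same argument, applied directly to the parent distributions together with Theorem~\ref{res:convert}, would give the Lorenz criterion without the refinement, this lemma, or the folding step of Step~2.

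What it costs is that the real content of the lemma, the existence of the coupling, is outsourced to the finite Strassen theorem. That theorem is of the same depth: via your elementary reductions it is essentially this lemma restated. You therefore trade one citation for another, exactly as the paper does.

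Your circularity worry is unfounded. Theorem~\ref{res:convert} is proved by the bijection~\eqref{eq:Tdef}; Strassen enters only afterwards, to link martingales to convex order.

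Only your fallback induction with weighted Robin Hood transfers would be genuinely self-contained. As sketched, it still lacks the choice of the pair, preservation of the ordering, and a termination argument.
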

The lemma is the standard weighted form of the Hardy--Littlewood--Pólya characterization of majorization: the inequalities above are precisely the condition that the weighted vector $b$ is majorized by the weighted vector $a$. Applying Lemma~\ref{lem:weighted-HLP} to the sequences constructed above, we obtain a non-negative matrix $D$ such that
\begin{align}
\sum_{r=1}^N D_{rs}=1,
\qquad
D\ell=\ell,
\qquad
D(\ell\odot a)=\ell\odot b.
\end{align}
Define
\begin{align}
\label{eq:folding}
T_{ij}:=
\frac{1}{g_j(t_1)}
\sum_{r\in J_j}
\sum_{s\in\mathcal I_i}
\ell_rD_{sr},
\qquad i,j\in[m].
\end{align}
Such a matrix satisfies the following proposition.
\begin{proposition}
\label{prop:refolding}
The matrix $T$ defined by \eqref{eq:folding} is stochastic and satisfies
\[
Tg(t_1)=g(t_2),
\qquad
Tk(t_1)=k(t_2).
\]
\end{proposition}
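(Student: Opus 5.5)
The proof is a direct verification of three properties of $T$ as defined in \eqref{eq:folding}, using the block structure $\{J_j\}$, $\{\mathcal I_i\}$ and the three properties of $D$ supplied by Lemma~\ref{lem:weighted-HLP}. Throughout, we interchange the finite sums over blocks and over $r,s\in[N]$. Since both families partition $[N]$, a sum $\sum_i\sum_{s\in\mathcal I_i}$ collapses to $\sum_{s=1}^N$, and likewise for $\sum_j\sum_{r\in J_j}$.

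\emph{Nonnegativity and column-stochasticity.} Nonnegativity is immediate from $\ell_r>0$, $D_{sr}\ge0$ and $g_j(t_1)>0$. If some $g_j(t_1)=0$, that state occupies no slab, and its column can be set to any probability vector without affecting anything. For the column sums we write
\[
\sum_i T_{ij}=\frac{1}{g_j(t_1)}\sum_{r\in J_j}\ell_r\sum_{s=1}^N D_{sr}=\frac{1}{g_j(t_1)}\sum_{r\in J_j}\ell_r=1,
\]
using the column normalization $\sum_s D_{sr}=1$ and $\sum_{r\in J_j}\ell_r=g_j(t_1)$. (The lemma writes this as $\sum_r D_{rs}=1$; we read it as "columns of $D$ sum to one", which is the convention consistent with $T$ being column-stochastic.)

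\emph{Reference preservation.} In $(Tg(t_1))_i=\sum_j T_{ij}g_j(t_1)$ the factor $g_j(t_1)$ cancels the prefactor, and the sum over $j$ and $r\in J_j$ collapses to a sum over all $r$. This gives
\[
(Tg(t_1))_i=\sum_{s\in\mathcal I_i}(D\ell)_s=\sum_{s\in\mathcal I_i}\ell_s=g_i(t_2),
\]
by $D\ell=\ell$.

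\emph{Transport of $k$.} We first rewrite $k_j(t_1)=\kappa_j(t_1)g_j(t_1)$. Because $a_r=\kappa_j(t_1)$ for $r\in J_j$, the factor $\kappa_j(t_1)$ can be moved inside the $r$-sum as $a_r$. After the same collapse of sums we obtain
\[
(Tk(t_1))_i=\sum_{s\in\mathcal I_i}\big(D(\ell\odot a)\big)_s=\sum_{s\in\mathcal I_i}\ell_s b_s.
\]
Since $b_s=\kappa_i(t_2)$ on $\mathcal I_i$, this equals $\kappa_i(t_2)g_i(t_2)=k_i(t_2)$.

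\emph{Main obstacle.} The only delicate point is bookkeeping. One must make sure the "sorted" orderings are consistent: $J_j$ is indexed by the $t_1$-sorted labels and $\mathcal I_i$ by the $t_2$-sorted labels, so the $T$ built here is the $\widetilde T$ of Lemma~\ref{lem:sorting-reduction}, acting on the sorted vectors. Applying that lemma then transfers the result to the original labelling.
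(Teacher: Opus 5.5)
Your proposal is correct and follows essentially the same route as the paper: non-negativity, column sums via the column normalization of $D$ together with $\sum_{r\in J_j}\ell_r=g_j(t_1)$, and then collapsing the block sums to $\sum_{s\in\mathcal I_i}(D\ell)_s$ and $\sum_{s\in\mathcal I_i}\big(D(\ell\odot a)\big)_s$, with the sorted labelling transferred back by Lemma~\ref{lem:sorting-reduction}. Your side remarks are harmless additions that the paper leaves implicit: the zero-weight case, and reading $\sum_r D_{rs}=1$ as column normalization, which is indeed what the lemma states.
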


\begin{proof}
Non-negativity is immediate. Moreover,
\begin{align}
\sum_{i=1}^mT_{ij}
&=
\frac{1}{g_j(t_1)}
\sum_{r\in J_j}\ell_r
\sum_{i=1}^m\sum_{s\in\mathcal I_i}D_{sr}
\nonumber\\
&=
\frac{1}{g_j(t_1)}
\sum_{r\in J_j}\ell_r
\sum_{s=1}^ND_{sr}
=
\frac{1}{g_j(t_1)}
\sum_{r\in J_j}\ell_r
=1,
\end{align}
where we used the column-stochasticity of $D$ and
$\sum_{r\in J_j}\ell_r=g_j(t_1)$. Hence $T$ is stochastic.
For the $g$-transport property,
\begin{align}
\sum_{j=1}^mT_{ij}g_j(t_1)
&=
\sum_{j=1}^m
\sum_{r\in J_j}
\sum_{s\in\mathcal I_i}
\ell_rD_{sr}
\nonumber\\
&=
\sum_{s\in\mathcal I_i}
\sum_{r=1}^N D_{sr}\ell_r
=
\sum_{s\in\mathcal I_i}(D\ell)_s
=
\sum_{s\in\mathcal I_i}\ell_s
=
g_i(t_2).
\end{align}
Thus $Tg(t_1)=g(t_2)$.
For the $k$-transport property, observe that for $r\in J_j$,
\[
a_r=\kappa_j(t_1),
\]
and hence
\[
\ell_ra_r
=
\ell_r\kappa_j(t_1).
\]
Using $k_j(t_1)=\kappa_j(t_1)g_j(t_1)$ and \eqref{eq:folding},
\begin{align}
\sum_{j=1}^mT_{ij}k_j(t_1)
&=
\sum_{j=1}^m
\sum_{r\in J_j}
\sum_{s\in\mathcal I_i}
\ell_ra_rD_{sr}
\nonumber\\
&=
\sum_{s\in\mathcal I_i}
\sum_{r=1}^ND_{sr}\ell_ra_r
=
\sum_{s\in\mathcal I_i}
\big[D(\ell\odot a)\big]_s
\nonumber\\
&=
\sum_{s\in\mathcal I_i}\ell_sb_s.
\end{align}
Since $b_s=\kappa_i(t_2)$ for $s\in\mathcal I_i$,
\begin{align}
\sum_{s\in\mathcal I_i}\ell_sb_s
&=
\kappa_i(t_2)\sum_{s\in\mathcal I_i}\ell_s
=
\kappa_i(t_2)g_i(t_2)
=
k_i(t_2).
\end{align}
Therefore $Tk(t_1)=k(t_2)$.
\end{proof}

This proves \eqref{eq:T_conditions}, completing the proof of the sufficiency direction.
\vspace{0.5 cm}

\subsubsection*{Conclusion of the proof}
By Proposition~\ref{prop:necessity}, the existence of a stochastic matrix $T$ satisfying
\[
Tg(t_1)=g(t_2),
\qquad
Tk(t_1)=k(t_2)
\]
implies
\[
L_\kappa(\Lambda;t_1)\geq L_\kappa(\Lambda;t_2)
\qquad\text{for all }\Lambda\in[0,1].
\]
Conversely, assuming \eqref{eq:lorenz_conditions}, Lemma~\ref{lem:finite-checkpoints} reduces the Lorenz-curve condition to the finite set of atomic majorization inequalities. Lemma~\ref{lem:weighted-HLP} then yields an $\ell$-doubly-stochastic matrix $D$ satisfying
\[
D(\ell\odot a)=\ell\odot b,
\]
and Proposition~\ref{prop:refolding} folds this atomic transformation into a stochastic matrix $T$ satisfying
\[
Tg(t_1)=g(t_2),
\qquad
Tk(t_1)=k(t_2)
\]
in the sorted coordinates. Finally, Lemma~\ref{lem:sorting-reduction} transports $T$ back to the original coordinates while preserving both identities. Hence \eqref{eq:T_conditions} and \eqref{eq:lorenz_conditions} are equivalent, proving Theorem~3.

\clearpage
\section{B --- Proof of Theorem 1}
\noindent
Let $P(j_1,j_2)$ denote a joint distribution of the pair $(j_1,j_2)$, with marginals
\begin{align}
    P(j_1) &= \sum_n P(j_1,n),
    &
    P(j_2) &= \sum_n P(n,j_2),
\end{align}
and define
\begin{equation}
    T_{j_2j_1}
    := \frac{P(j_1,j_2)}{g_{j_1}(t_1)}.
\end{equation}
We establish the three properties required for $T$ to be a stochastic transformation satisfying
$Tg(t_1)=g(t_2)$ and $Tk(t_1)=k(t_2)$.
First, summing over the first index gives
\begin{align}
    \sum_n T_{nj_1}
    = \sum_n \frac{P(j_1,n)}{g_{j_1}(t_1)}
    = \frac{P(j_1)}{g_{j_1}(t_1)}.
\end{align}
Therefore,
\begin{align}
    T \text{ is stochastic}
    \iff
    P(j_1)=g_{j_1}(t_1).
\end{align}
Second, we have
\begin{align}
    \sum_n T_{j_2n}g_n(t_1)
    &= \sum_n P(n,j_2)
    = P(j_2).
\end{align}
Hence,
\begin{align}
    Tg(t_1)=g(t_2)
    \iff
    P(j_2)=g_{j_2}(t_2).
\end{align}
Finally, using $k_n(t_1)=g_n(t_1)\kappa_n(t_1)$, we obtain
\begin{align}
    \sum_n T_{j_2n}k_n(t_1)
    &= \sum_n P(n,j_2)\kappa_n(t_1)
    \\
    &= P(j_2)\,
    \mathbb{E}\!\left[
        x_1 \,\middle|\, x_2=\kappa_{j_2}(t_2)
    \right].
\end{align}
Consequently,
\begin{align}
    Tk(t_1)=k(t_2)
    \iff
    P(j_2)\,
    \mathbb{E}\!\left[
        x_1 \,\middle|\, x_2=\kappa_{j_2}(t_2)
    \right]
    = k_{j_2}(t_2).
\end{align}
Combining these three results, a stochastic transformation $T$ satisfying
\begin{align}
    Tg(t_1)=g(t_2),
    \qquad
    Tk(t_1)=k(t_2)
\end{align}
exists if and only if there exists a joint distribution $P(j_1,j_2)$ with marginals
\begin{align}
    P(j_1)=g_{j_1}(t_1),
    \qquad
    P(j_2)=g_{j_2}(t_2),
\end{align}
such that
\begin{align}
    \mathbb{E}\!\left[
        x_1 \,\middle|\, x_2=\kappa_{j_2}(t_2)
    \right]
    =\kappa_{j_2}(t_2).
\end{align}
In other words, $P(j_1,j_2)$ defines a martingale coupling between the distributions $p_\kappa(x_1,t_1)$ and $p_\kappa(x_2,t_2)$.

\clearpage
\section{C --- Proof of Theorem 3}
\noindent
We begin by evaluating the exponential average of the entropy difference:
\begin{align}
    \mathbb{E}_\Pi\!\left[e^{-\Delta s}\right]
    &= \sum_{j_1,j_2}
    k_{j_1}(t_1) T_{j_2 j_1}
    \frac{\kappa_{j_2}(t_2)}{\kappa_{j_1}(t_1)}
    \\
    &= \sum_{j_1,j_2}
    T_{j_2 j_1} g_{j_1}(t_1)\kappa_{j_2}(t_2)
    \\
    &= \sum_{j_2}
    g^\mathrm{true}_{j_2}(t_2)\kappa_{j_2}(t_2)
    \\
    &= \sum_{j_2}
    k_{j_2}(t_2)
    \frac{g^\mathrm{true}_{j_2}(t_2)}
    {g_{j_2}(t_2)}.
\end{align}
Consequently,
\begin{align}
    \mathbb{E}_\Pi\!\left[e^{-\Delta s}\right]-1
    =
    \sum_{j_2}
    \frac{
        k_{j_2}(t_2)
        \left[
            g^\mathrm{true}_{j_2}(t_2)-g_{j_2}(t_2)
        \right]
    }{
        g_{j_2}(t_2)
    }.
\end{align}
Applying the Cauchy--Schwarz inequality then yields
\begin{align}
    \left(
        \mathbb{E}_\Pi\!\left[e^{-\Delta s}\right]-1
    \right)^2
    \leq
    \left[
        \sum_{j_2}
        \frac{k_{j_2}^2(t_2)}
        {g_{j_2}(t_2)}
    \right]
    \left[
        \sum_{j_2}
        \frac{
            \left(
                g^\mathrm{true}_{j_2}(t_2)
                -g_{j_2}(t_2)
            \right)^2
        }{
            g_{j_2}(t_2)
        }
    \right].
\end{align}
Recognizing the second factor as the $\chi^2$-divergence,
\begin{align}
    D_{\chi^2}\!\left(
        g^\mathrm{true}(t_2)
        \,\middle\|\,
        g(t_2)
    \right)
    \geq
    \frac{
        \left(
            \mathbb{E}_\Pi\!\left[e^{-\Delta s}\right]-1
        \right)^2
    }{
        \displaystyle
        \sum_j
        k_j^2(t_2)/g_j(t_2)
    },
\end{align}
which proves the stated result.


\end{document}